\definecolor{light-gray}{gray}{0.9}
\newcommand{\w}{e\xspace}
	\newtheorem{lemma}{Lemma}%
	\newtheorem{theorem}{Theorem}%
	\newtheorem{proposition}{Proposition}%
	\newtheorem{corollary}{Corollary}%
	\newtheorem{example}{Example}
	\newcommand\eat[1]{}
	\newlength{\wordlength}
	\newcommand{\wordbox}[3][c]{\settowidth{\wordlength}{#3}\makebox[\wordlength][#1]{#2}}
	\newcommand{\midd}{\mathbin{:}}
	\newcommand{\eqclass}[2][]{\ifthenelse{\equal{#1}{}}{[#2]}{[#2]_{\sim_{#1}}}}
	\newcommand{\pref}{\succsim\xspace}
\newcommand{\nbh}[1][]{
	\ifthenelse{\equal{#1}{}}{\nu}{\nu(#1)}
}
\newcommand{\cstr}[1][]{
	\ifthenelse{\equal{#1}{}}{\mathscr S}{\cstr(#1)}
}
\newcommand{\choice}[1][]{
	\ifthenelse{\equal{#1}{}}{\mathit{C}}{\choice(#1)}

		\newcommand{\ml}[1][]{\ensuremath{\ifthenelse{\equal{#1}{}}{\mathit{ML}}{\mathit{ML}(#1)}}\xspace}
		\newcommand{\sml}[1][]{\ensuremath{\ifthenelse{\equal{#1}{}}{\mathit{SML}}{\mathit{SML}(#1)}}\xspace}
		\newcommand{\sd}[1][]{\ensuremath{\ifthenelse{\equal{#1}{}}{\mathit{SD}}{\mathit{SD}(#1)}}\xspace}
		\newcommand{\rsd}[1][]{\ensuremath{\ifthenelse{\equal{#1}{}}{\mathit{RSD}}{\mathit{RSD}(#1)}}\xspace}
		\newcommand{\rd}[1][]{\ensuremath{\ifthenelse{\equal{#1}{}}{\mathit{RD}}{\mathit{RD}(#1)}}\xspace}
		\newcommand{\st}[1][]{\ensuremath{\ifthenelse{\equal{#1}{}}{\mathit{ST}}{\mathit{ST}(#1)}}\xspace}
		\newcommand{\bd}[1][]{\ensuremath{\ifthenelse{\equal{#1}{}}{\mathit{BD}}{\mathit{BD}(#1)}}\xspace}
		\newcommand{\pc}[1][]{\ensuremath{\ifthenelse{\equal{#1}{}}{\mathit{PC}}{\mathit{PC}(#1)}}\xspace}
		\newcommand{\dl}[1][]{\ensuremath{\ifthenelse{\equal{#1}{}}{\mathit{DL}}{\mathit{DL}(#1)}}\xspace}
		\newcommand{\ul}[1][]{\ensuremath{\ifthenelse{\equal{#1}{}}{\mathit{UL}}{\mathit{UL}(#1)}}\xspace}

			\newcommand{\indiff}{\ensuremath{\sim}}}
\begin{document}

		\title{Mechanisms for House Allocation\\ with Existing Tenants under Dichotomous Preferences}

		% \title{Generalizing Top Trading Cycles\\ for Housing Markets with Fractional Endowments\\\vspace{1em} \small Draft: Please do not cite or circulate. Comments welcome!}

%: Computational and Manipulation Aspects  }%\tnoteref{t1}} 
	%\tnotetext[t1]{Thanks!} 
	% \tnotetext[t2]{The second title footnote which is a longer than the first one and with an intention to fill
	% in up more than one line while formatting.}

	\author{Haris Aziz\corref{cor1}} \ead{haris.aziz@nicta.com.au}
	% \author{Markus Brill} \ead{brill@in.tum.de}
	% 	\author{Paul Harrenstein} \ead{harrenst@in.tum.de}
		\address{Data61 and UNSW, Sydney, Australia}
	%\cortext[cor1]{Corresponding author} 

%	\fntext[fn1]{Thanks!} 
%%

% TODO:
% \begin{itemize}
% \item prove that \FTTC is strategyproof for dichotomous preferences.
% \item prove that \FTTC allocation is core stable for lexicographic preferences.
% %\item prove that \FTTC allocation with uniform endowments is equivalent to the egalitarian mechanism of \citet{BoMo04a}.
% \end{itemize}

	\begin{keyword}
	 House allocation\sep 
	Housing markets\sep
	Core\sep
	Top Trading Cycles\sep 
	Dichotomous preferences\\
		
		\emph{JEL}: C62, C63, and C78
	\end{keyword}

    \begin{abstract}
        We consider house allocation with existing tenants in which each agent has dichotomous preferences. %We first point out that previously presented algorithms for the problem do not satisfy maximum number of agents or are not guaranteed to be strategyproof. 
        We present strategyproof, polynomial-time, and (strongly) individually rational algorithms that satisfy the maximum number of agents. For the endowment only model, one of the algorithms also returns a core-stable allocation. 
            \end{abstract}

	\maketitle

		%\today

		\sloppy

%\today

    \section{Introduction}
    
    %Principled allocation of indivisible goods is a central problem in multi-agent settings~\citep{BCM15a, SoUn10a}.
    We consider a setting in which there is a set of agents $N=\{1,\ldots, n\}$, set of houses $H=\{h_1,\ldots, h_m\}$, each agent owns at most one and possibly zero houses from $H$ and there may be houses that are not owned by any agent.  In the literature the setting is termed as \emph{house allocation with existing tenants}~\citep{AbSo99a,SoUn10b}.  If each agent owns exactly one house and each house is owned, the setting is equivalent to the \emph{housing market setting}~\citep{ACMM05a,Ma94a,ShSc74a}. If no agent owns a house, the setting is equivalent to the house allocation model~\citep{HyZe79a, Sven94a}.
    
    The setting captures various allocations settings including allocation of dormitory rooms as well as kidney exchange markets. 
We consider housing market with existing tenants setting with the assumption that each agent has dichotomous preferences with utility zero or one. 

Dichotomous preferences are prevalent in many settings. A kidney may either be compatible or incompatible for a patient. A dormitory room may within budget or outside budget for a student. \citet{BoMo04a} give numerous other examples where dichotomous preferences make sense in allocation and matching problems. Interestingly, since dichotomous preferences require indifference in preferences, they are not covered by many models considered in the literature that assume strict preferences~\citep{AbSo99a,SoUn10b}. 
If a house is maximally preferred, we will refer to it as an acceptable house.
If an agent gets a house that is maximally preferred, we will say that the agent is \emph{satisfied}. 

We consider the standard normative properties in market design: 
(i) \emph{Pareto optimality}: there should be no allocation in which each agent is at least as happy and at least one agent is strictly happier (ii) \emph{individual rationality (IR)}: no agent should have an incentive to leave the allocation program (iii) \emph{strategyproofness}: no agent should have an incentive to misreport his preferences. We also consider a stronger notion of individual rationality that requires that either an agent keeps his endowment or gets some house that is strictly better.
Finally, we consider a property stronger than Pareto optimality: maximizing the number of agents who are satisfied. If we rephrase dichotomous preferences in terms of 1-0 utilities, the goal is equivalent to maximizing social welfare. Under 1-0 utilities, an allocation maximizes welfare if a maximum number of agents get utility 1 or in other words a maximum number of agents are satisfied.  Throughout the paper, when we talk about welfare, we will assume that the underlying cardinal utilities are 1-0. 

The main results are as follows:

 \begin{theorem}
     For house allocation with existing tenants with dichotomous preferences, 
     % \begin{enumerate}
     %              \item 
there exists an algorithm (MSIR) that is polynomial-time, strategyproof, and allocates maximally preferred houses to a maximum number of agents subject to strong individual rationality. In the endowment setting in which each house is initially owned by an agent, the mechanisms are core stable.  
     %     \item there exists an algorithm (MIR) that is polynomial-time, strategyproof, maximizes welfare subject to individual rationality, and is Pareto optimal.
     %
     % \end{enumerate}
    \end{theorem}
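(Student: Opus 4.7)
The plan is to describe MSIR as a two-stage procedure and then verify each claimed property in turn. \textbf{Stage one} fixes every agent whose endowment is already acceptable to that agent at that endowment; under dichotomous preferences, strong individual rationality permits no other outcome for such an agent (no strictly better house exists), so this step is forced. \textbf{Stage two} operates on the residual instance, and I would formulate it as a bipartite matching problem between remaining agents and remaining houses they find acceptable, with the extra restriction (again forced by strong IR) that a remaining agent's endowment may be reallocated only if the owner is simultaneously matched to an acceptable house. For tie-breaking among maximum matchings I would use a priority-based rule along a fixed order on the agents: sequentially commit each agent to an acceptable house whenever doing so remains compatible with achieving the maximum matching size of the residual subproblem.

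Polynomial running time then follows because each priority commitment reduces to a single maximum-matching computation. The welfare claim follows by construction: stage one is forced by strong IR, stage two returns an optimum IR-constrained matching on the residual instance, and the two counts sum to the maximum number of satisfied agents consistent with strong IR. Strategyproofness follows the standard template for dichotomous preferences: an agent is satisfied iff some maximum matching satisfies him, and the priority rule selects such a matching in a way that is monotone in each agent's reported acceptable set, so no deviation can flip an agent's own status from unsatisfied to satisfied.

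For \textbf{core stability} in the pure endowment setting (no unowned houses), the key structural observation is that strong IR collapses the stage-two matching into a collection of disjoint \emph{trading cycles}: a satisfied residual agent must simultaneously surrender her endowment to some other cycle member and receive another member's endowment, so the residual allocation is in fact a maximum vertex-disjoint cycle packing in the directed acceptability graph over the remaining agents. Under dichotomous preferences, a weak-core blocking coalition $S$ can contain only currently unsatisfied agents, since a satisfied agent cannot be made strictly better. Such a coalition would exhibit a cycle packing on $S$-agents and $S$-endowments that is disjoint from MSIR's cycles; appending it would strictly enlarge the total cycle packing, contradicting the stage-two maximality. Hence no blocking coalition exists.

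The main obstacle I foresee is verifying that the priority-based maximum matching of stage two always respects the strong-IR constraint in the endowment setting, so that the output really is a cycle packing rather than a matching with orphaned endowments, and that the monotone-selection proof of strategyproofness still goes through in the presence of this IR-cycle constraint. I expect this to reduce, in the endowment case, to formulating stage two directly as a maximum cycle packing (equivalently, maximum bipartite matching on the agent-to-agent-endowment graph) and then invoking the standard priority mechanism for that reformulated problem; the delicate step is checking that the priority tie-breaking is consistent across the two stages so that all three guarantees -- maximum welfare, strategyproofness, and core stability -- hold simultaneously.
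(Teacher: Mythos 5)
Your proposal follows essentially the same route as the paper: the paper's MSIR folds your ``stage one'' into the graph construction (an agent with an acceptable endowment gets a single edge to his own house, forcing him to keep it), your priority-based sequential commitment is exactly the paper's permutation loop that deletes an agent's weight-zero edges whenever a weight-$W$ perfect matching survives, and your core argument (a blocking coalition can contain only unsatisfied agents, who under strong IR keep their own endowments, so blocking would contradict welfare maximality subject to S-IR) is precisely the paper's lemma that any S-IR welfare-maximizing allocation is core stable, merely rephrased in cycle-packing language. The only cosmetic differences are your explicit two-stage decomposition and the cycle-packing view of the endowment case, neither of which changes the substance of the argument.
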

     
     \begin{theorem}
         For house allocation with existing tenants with with dichotomous preferences, there exists an algorithm (MIR) that is polynomial-time, strategyproof, allocates maximally preferred houses to a maximum number of agents, and is Pareto optimal. 
         \end{theorem}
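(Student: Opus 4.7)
The plan is to define MIR as a priority-ordered welfare-maximizing matching subject to IR, and then verify each claim in turn. Form the bipartite graph $G=(N,H,E)$ with $(a,h)\in E$ iff agent $a$ reports $h$ acceptable, and let $N_R\subseteq N$ collect those agents whose own endowment is (reported) acceptable. IR forces every $a\in N_R$ to receive an acceptable house, which is always feasible since each $a\in N_R$ can be matched to her own endowment; so MIR looks for a maximum-cardinality matching in $G$ that saturates $N_R$. This is a standard constrained bipartite matching, computable in polynomial time by seeding with the endowment matching for $N_R$ and then augmenting along alternating paths that never un-saturate $N_R$. Agents outside $N_R$ that end up unmatched simply keep their endowment (if any), which trivially preserves IR since the endowment has utility $0$.

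Pareto optimality follows directly from welfare maximization. Suppose some IR allocation $Y$ Pareto-dominates MIR's output $X$: under dichotomous preferences any strict improvement is a switch of some agent's utility from $0$ to $1$, and no agent is worse off, so $\textnormal{welfare}(Y)\ge\textnormal{welfare}(X)+1$, contradicting the maximality of $X$ among IR allocations.

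The main obstacle is strategyproofness, which I would obtain by exploiting the matroid structure of matchable agent-sets. Let $\mathcal F=\{S\subseteq N\setminus N_R : G \text{ admits a matching saturating } N_R\cup S\}$; since $N_R$ is itself an independent set of the transversal matroid of $G$, $\mathcal F$ coincides with that transversal matroid contracted by $N_R$ and is therefore a matroid. Fix any priority order $\pi$ on $N\setminus N_R$; an equivalent implementation of MIR scans $N\setminus N_R$ in $\pi$-order and adds $a$ to $S$ whenever $S\cup\{a\}\in\mathcal F$, which by the matroid greedy returns a base of $\mathcal F$, hence a welfare-maximal matching. Now consider an agent $a$ with true acceptable set $A(a)$ contemplating a deviation to $A'(a)$: if $a$ is already satisfied under truthful reporting there is nothing to gain; otherwise, if some deviation causes MIR to match $a$ to $h\in A(a)$, then the matching used under the deviation (which includes the true edge $(a,h)$ and the unchanged edges of all other agents) is also feasible in the true-graph, so the greedy under truth would have added $a$ when its turn came, contradicting the hypothesis. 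The subtlest case to finalize the argument is a deviation that flips $a$'s membership in $N_R$ (misreporting the acceptability of one's own endowment); here one checks that forcing or removing $a$ from $N_R$ merely pre-commits the greedy's decision about $a$ in a way that truthful reporting already replicates, completing the proof.
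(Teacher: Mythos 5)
Your construction is, at bottom, the same algorithm as the paper's MIR: the paper's loop that deletes an agent's weight-zero edges and keeps the deletion iff a maximum-weight perfect matching of weight $W$ survives is exactly your priority greedy on the family $\mathcal{F}$ of satisfiable agent sets, and your identification of $\mathcal{F}$ as a transversal matroid contracted by $N_R$ is a genuinely cleaner framing. In particular it buys you, for free, what the paper proves by a separate exchange lemma (``an allocation that maximizes welfare subject to IR has the same welfare even if IR is not imposed''): since $N_R$ is independent and all bases of the transversal matroid have the same cardinality, the IR-constrained optimum equals the unconstrained optimum. That also quietly repairs your Pareto paragraph, which as written only excludes \emph{IR} dominators; either invoke unconstrained welfare maximality, or add the paper's one-line bridge that any allocation Pareto-dominating an IR allocation is itself IR.

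The strategyproofness argument, however, has a genuine hole exactly at the case you flag and then wave off. The claim that flipping membership in $N_R$ ``merely pre-commits the greedy's decision about $a$'' is false: joining $N_R'$ both (a) changes earlier agents' feasibility tests, since saturating sets now contain $a$ and hence use $a$'s reported edges, possibly blocking an earlier agent and freeing a house, and (b) guarantees $a$'s saturation without $a$ ever passing the greedy test at his own turn, after which the unspecified choice among tied maximum matchings can hand $a$ a truly acceptable house. Concretely: let $N=\{a,b,c\}$, $H=\{h_1,h_2\}$, $\omega(a)=h_1$, $\omega(c)=h_2$, $b$ unendowed, true sets $A(a)=\{h_2\}$, $A(b)=\{h_1\}$, $A(c)=\{h_1,h_2\}$, priority $b$ before $a$. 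Truthfully, $N_R=\{c\}$, the greedy adds $b$ (match $b\to h_1$, $c\to h_2$), cannot add $a$ (three agents, two houses), and the final matching forces $b\to h_1$, $c\to h_2$, so $a$ gets utility $0$ with certainty. If $a$ reports $A'(a)=\{h_1,h_2\}$, then $N_R'=\{a,c\}$, $b$ can no longer be added, and the maximum matchings saturating $\{a,c\}$ are $\{a\to h_1,\,c\to h_2\}$ and $\{a\to h_2,\,c\to h_1\}$; if tie-breaking selects the latter, $a$ receives $h_2\in A(a)$ and strictly gains. So as specified your mechanism admits an instantiation that is manipulable, and closing the gap requires both fixing a tie-breaking rule for the final matching and giving an actual argument for the flip case under that rule. (For what it is worth, the paper's own strategyproofness proof glosses over the same tie-breaking issue, so you are in good company --- but your proof cannot rest on the one sentence you give.)
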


    \section{Related Work}

    House allocation with existing tenants model was introduced by \citet{AbSo99a}. They assumed that agents have strict preferences and introduced mechanism that are individually rational, Pareto optimal and strategyproof. The results do not apply to the setting with dichotomous preferences 
    
In the restricted model of housing market, \citet{JaMa12a} proposed a mechanism called TCR that is polynomial-time, individually rational, Pareto optimal and strategyproof even if agents have indifferences. The results imply the same result for dichotomous preferences. %However, we show that TCR does not satisfy the maximum number of agents that is a central concern such as in kidney exchange. 

Another natural approach to allocation problems is to satisfy the maximum number of agents by computing a maximum weight matching. However, such an approach violates individual rationality. 
\citet{KMRZ+14a} proposed algorithms for house allocation with 1-0 utility. However their algorithms are for the model in which agents do not have any endowment.

For kidney exchange with 1-0 utilities, \citet{ABS07a} and \citet{BMR09a} presented an algorithm that is strongly individually rational and satisfies the maximum number of agents. The algorithm does not cater for the cases where agents own acceptable houses, or there may be houses that are not owned by any agents. Furthermore, they did not establish strategyproofness or core stability of their mechanism. 
    
    % \subsection{Previous Approaches}
   %
   %  \begin{itemize}
   %      \item TTC works for strict preferences.
   %      \item TCR: does not satisfy the maximum number of agents subject to individual rationality.
   %      \item Maximum cardinaality matching may not be inidividually rational. Also it may not be strategyproof.
   %      \item Algorithm by \citet{ABS07a} for kidney exchange works is individually rational and satisfies the maximum number of agents. However it was not shown to be strategyproof.
   %  \end{itemize}
    
    \section{Preliminaries}

    Let $N=\{1,\ldots, n\}$ be a set of $n$ agents and $H=\{h_1,\ldots, h_m\}$ a set of $m$ houses. The endowment function $\omega: N\rightarrow H$ assigns to each agent the house he originally owns. Each agent has complete and transitive preferences $\pref_i$ over the houses and $\pref=(\pref_1,\ldots \pref_n)$ is the preference profile of the agents. The \emph{housing market} is a quadruple $M=(N, H, \w, \pref)$. For $S\subseteq N$, we denote $\bigcup_{i\in S}\{\w(i)\}$ by $\w(S)$.
    A function $x:S\rightarrow H$ is an \emph{allocation} on $S\subseteq N$ if there exists a function $\pi$ on $S$ such that $x(i)=\w(\pi(i))$ for each $i\in S$. The goal in housing markets is to re-allocate the houses in a mutually beneficial and efficient way.

    Since we consider dichotomous preferences, we will denote by $A_i$ the set of houses maximally preferred by $i\in N$. These houses are \emph{acceptable} to $i$ whereas the other houses are unacceptable to $i$. Agent $i$ is \emph{satisfied} when he gets an acceptable house. Since we consider welfare as well, we will assume that that acceptable houses give utility one to the agent and unacceptable houses give utility zero. Hence getting an unacceptable house is equivalent to not getting a house.

    %Next, we present a number of solution concepts for house allocation. 
    %\paragraph{Solution concepts:}
    An allocation $x$ is \emph{individually rational (IR)} if $x(i)\pref_i \w(i)$ for all $i\in N$
     An allocation $x$ is \emph{strongly individually rational (S-IR)} if $x(i)= \w(i)$ or $x(i) \succ_i (\w(i))$ for all $i\in N$.
    
    A coalition $S\subseteq N$ \emph{blocks} an allocation $x$ on $N$ if there exists an allocation $y$ on $S$ such that for all $i\in S$, $y(i)\in \w(S)$ and $y(i)\succ_i x(i)$. An allocation $x$ on $N$ is in the \emph{core (C)} of market $M$ if it admits no blocking coalition. An allocation that is in the core is also said to be \emph{core stable}.
    % An allocation is \emph{weakly Pareto optimal (w-PO)} if $N$ is not a blocking coalition.
     A coalition $S\subseteq N$ \emph{weakly blocks} an allocation $x$ on $N$ if there exists an allocation $y$ on $S$ such that for all $i\in S$, $y(i)\in \w(S)$, $y(i)\pref_i x(i)$, and there exists an $i\in S$ such that $y(i)\succ_i x(i)$. An allocation $x$ on $N$ is in the \emph{strict core (SC)} of market $M$ if it admits no weakly blocking coalition. An allocation that is in the strict core is also said to be \emph{strict core stable}.
    An allocation is \emph{Pareto optimal (PO)} if $N$ is not a weakly blocking coalition.
     It is clear that strict core implies core and also Pareto optimality. Core implies weak Pareto optimality and also individual rationality.

    \emph{From now on we will assume 1-0 utilities for all the statements. }

    \begin{example}
        Consider a house allocation with existing tenants setting:
        \begin{itemize}
            \item $N=\{1,2,3,4,5\}$;
            \item $H=\{h_1,h_2,h_3,h_4, h_5, h_6\}$;
            \item $\w(i)=h_i$ for $i\in \{1,\ldots, 4\}$;
            \item $A_1=\{h_2\}$; $A_2=\{h_3\}$; $A_3=\{h_1\}$, $A_4=\{h_5\}$, $A_5=\{h_5, h_6\}$  
        \end{itemize}
        Then a feasible SIR outcome and Pareto optimal outcome is allocation $x$ such that $x(1)=h_2$, $x(2)=h_3$, $x(3)=h_1$, $x(4)=h_4$, and  $x(5)=h_5$.
        
        Then a feasible IR outcome and Pareto optimal outcome is allocation $y$ such that $y(1)=h_2$, $y(2)=h_3$, $y(3)=h_1$, $y(4)=h_5$, and  $y(5)=h_6$.
          
        \end{example}
    
    It is obvious that  S-IR implies IR.
    \begin{proposition}
        S-IR implies IR.
        \end{proposition}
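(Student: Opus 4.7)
The plan is to argue directly from the definitions: strong individual rationality requires, for each $i\in N$, that either $x(i)=\w(i)$ or $x(i)\spref_i \w(i)$, while individual rationality only requires $x(i)\pref_i \w(i)$. So it suffices to show that each of the two S-IR cases implies the single IR condition.

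First I would fix an arbitrary S-IR allocation $x$ and an arbitrary agent $i\in N$, and split on which of the two disjuncts holds. In the case $x(i)=\w(i)$, reflexivity of $\pref_i$ (which is a complete and transitive preference relation, hence in particular reflexive) yields $x(i)\pref_i \w(i)$. In the case $x(i)\spref_i \w(i)$, the standard relationship between $\spref_i$ and $\pref_i$ --- namely $a\spref_i b$ implies $a\pref_i b$ --- immediately gives $x(i)\pref_i \w(i)$.

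Since $i$ was arbitrary, this establishes $x(i)\pref_i \w(i)$ for every $i\in N$, which is exactly the definition of IR. There is no real obstacle here; the statement is essentially a definitional unpacking, and the only thing worth noting is that we are implicitly using reflexivity of the weak preference relation, which is part of completeness as defined in the preliminaries.
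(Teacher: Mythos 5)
Your proof is correct and matches the paper's treatment: the paper simply remarks that the implication is obvious, and your definitional unpacking (reflexivity of $\pref_i$ in the case $x(i)=\w(i)$, and $\spref_i$ implying $\pref_i$ in the other case) is exactly the routine argument being left implicit. Nothing further is needed.
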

    
    \begin{proposition}
       An allocation that maximizes welfare subject to S-IR may have less welfare than an allocation that maximizes welfare subject to IR. 
        \end{proposition}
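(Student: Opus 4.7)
The plan is to prove this by exhibiting a small explicit counterexample; a two-agent, two-house instance suffices to separate the two welfare bounds.

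First, I would set up the instance $N=\{1,2\}$, $H=\{h_1,h_2\}$, with $\w(1)=h_1$, $\w(2)=h_2$, $A_1=\{h_2\}$, and $A_2=\{h_1,h_2\}$. The key feature is that agent $2$ owns an acceptable house and finds every house acceptable, so her endowment already gives her maximum utility.

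Next, I would argue the S-IR upper bound. Since $h_2\in A_2$, agent $2$ has utility $1$ at her endowment, and no house is strictly preferred to $h_2$ for her. The definition of S-IR therefore forces $x(2)=h_2$ in every S-IR allocation $x$. Consequently agent $1$ can only receive $h_1$, which is unacceptable, so the maximum welfare achievable subject to S-IR is $1$.

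Finally, I would exhibit the IR allocation $y(1)=h_2$, $y(2)=h_1$. For agent $1$, $h_2\succ_1 h_1$, and for agent $2$, both houses are acceptable so $h_1\pref_2 h_2$; thus $y$ is IR and has welfare $2$. This establishes the gap claimed in the proposition.

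There is no real obstacle to this argument; the only subtle point is that dichotomous preferences make indifference between two acceptable houses sufficient for the weak preference in IR, but insufficient for the strict preference that S-IR requires when an agent moves away from his endowment. That is precisely what breaks S-IR in this instance while leaving IR intact.
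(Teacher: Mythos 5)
Your counterexample is correct and takes essentially the same approach as the paper's proof: a two-agent instance exploiting that indifference satisfies the weak preference required by IR but not the strict improvement required by S-IR, so one agent's endowment is frozen under S-IR while an IR swap achieves strictly higher welfare. The only cosmetic difference is that you make agent $2$ indifferent between two \emph{acceptable} houses (separating welfare $1$ from $2$), whereas the paper makes the pivotal agent indifferent between two \emph{unacceptable} houses (separating welfare $0$ from $1$); both verifications are sound.
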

        \begin{proof}
           Consider the setting in which there are two agents and both agents have zero value houses but one agent wants the  other agent's house. Then the only feasible S-IR allocation is $\w$ whereas there exists an IR allocation in which one agents gets an acceptable house with utility 1. 
            \end{proof}

    \section{Mechanisms}
    
    We present two mechanisms: MSIR and MIR. MSIR satisfies the maximum number of agents subject to the SIR constraint. MIR satisfies the maximum number of agents subject to the IR constraint. Both algorithms are based on constructing a bipartite graph that admits a perfect matching and repeatedly modifying the graph while still ensuring the the graph still has a maximum weight perfect matching. 
    
    \subsection{The MSIR Mechanism}
    
    The MSIR mechanism is specified as Algorithm~\ref{algo:msir}.
    
    % \begin{example}
    %
    %     \end{example}
    %

    		\begin{algorithm}[h!]
    	        % \footnotesize
    		  \caption{MSIR}
    		  \label{RAR-algo}
    		\renewcommand{\algorithmicrequire}{\wordbox[l]{\textbf{Input}:}{\textbf{Output}:}} 
    		 \renewcommand{\algorithmicensure}{\wordbox[l]{\textbf{Output}:}{\textbf{Output}:}}
    		\begin{algorithmic}
    		%	\small
    			\REQUIRE $(N,H,\succsim ,\w)$
    			\ENSURE 
    		\end{algorithmic}
    		\algsetup{linenodelimiter=\,}
    		  \begin{algorithmic}[1] 
                  \STATE $G=(A\cup B, E,w)$.
                  Set $k$ to $|n-m|$. If $m>n$, then $A=N\cup D_N$ where $D_N=\{d_1,\ldots, d_k\}$. If $n>m$, then $B=H\cup D_H$ where $D_H=\{o_1,\ldots, o_k\}$. $E$ is defined as follows:
                  \begin{itemize}
                      \item For $a\in N$ and $b=\w(a)$, add $\{a,b\}\in E$  with $w(\{a,b\})=0$ if $b\notin A_a$ and $w(\{a,b\})=1$ if $b\in A_a$.
                     \item For $a\in N$ such that $\w(a)\neq null$, $\w(a)\notin A_a$, and $b=H\setminus \{\w(a)\}$ such that $b\in A_a$, add $\{a,b\}\in E$ with $w(\{a,b\})=1$.
    %                  \item For $a\in N$ such that $\w(a)\neq null$ and $b=H\setminus \{\w(a)\}$ such that $b\in A_a$, add $\{a,b\}\in E$ with $w(\{a,b\})=1$.
                        \item For $a\in N$ such that $\w(a)=null$ and $b\in B$, add $\{a,b\}\in E$ with %$w(\{a,b\})=0$.
$w(\{a,b\})=0$ if $b\notin A_a$ and $w(\{a,b\})=1$ if $b\in A_a$.
                        \item For $b\in D_H$ and  $a\in \{a\in N\midd \w(a)=null\}$, add  $\{a,b\}\in E$ 
with $w(\{a,b\})=0$ if $b\notin A_i$ and $w(\{a,b\})=1$ if $b\in A_a$.                        
                        %with $w(\{a,b\})=0$.
                                       \item For $a\in D_N$ and each $b\in B$, add  $\{a,b\}\in E$ with $w(\{a,b\})=0$.
                  \end{itemize}
              \COMMENT{Note that an agent with an acceptable endowment only has an edge to his endowment. An agent with an unacceptable endowment only has edges with his endowment as well as the acceptable houses.}
                  % \begin{itemize}
    %                   \item $\{a,b\}\in E$ if $a\in N$, $b\in H$, and $u_a(b)=1$. In that case $w(\{a,b\})=1$.
    %               \item $\{a,b\}\in E$ if $a\in N$, $b\in H$, and $b= \w(a)$. In that case $w(\{a,b\})=u_a(b)$.
    %               \item $\{a,b\}$ for $a\in D_N$ and $b\in B$. In that case $w(\{a,b\})=0$.
    %                                     \item $\{a,b\}$ for $b\in D_H$ and $a\in A$. In that case $w(\{a,b\})=0$.
    %               \end{itemize}
                  \STATE Compute the maximum weight perfect matching of $G$. Let the weight of the matching be $W$.
                  \STATE Take permutation $\pi$ of $N$
                  \FOR{$i$=1 to $n$}
                  \STATE Set $t_i$ to 1.
                  \STATE Remove from $G$ each edge $\{\pi(i),b\}\in E$  such that
                   $w(\{\pi(i),b)\})=0$. Compute the maximum weight perfect matching of $G$. If the weight is less than $W$ or if there does not exist a perfect matching, then  put back the removed edges and set $t_i$ to 0.              
                  % \IF{$\w(\pi(i)\neq null$}
      %             \STATE Remove edge $\{\pi(i),\w(\pi(i))\}$ from $G$ if $u_{\pi(i)}(\w(\pi(i)))=0$. Compute the maximum weight perfect matching of $G$. If the weight is less than $W$ or if there does not exist a perfect matching, then put back the edge $\{\pi(i),\w(\pi(i))\}$.
      %             \ELSE
      %            \STATE Remove each edge $\{\pi(i),b)\}$ from $G$ if
      %             $w(\{\pi(i),b)\})=0$. Compute the maximum weight perfect matching of $G$. If the weight is less than $W$ or if there does not exist a perfect matching, then put back the edge $\{\pi(i),\w(\pi(i))\}$.
      %             \ENDIF
                  \ENDFOR
                 \STATE Compute the maximum weight perfect matching $M$ of $G$. Consider the allocation $x$ in which each agent $i\in N$ gets a house that it is matched to in $M$. If $i\in N$ is matched to a dummy house, its allocation is null.  
      \RETURN $(x(1),\ldots, x(n))$
    		 \end{algorithmic}
    		\label{algo:msir}
    		\end{algorithm}

            \begin{proposition}
                MSIR runs in polynomial-time.% $O({(n+m)}^3n)$.
                \end{proposition}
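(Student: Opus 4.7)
The plan is to bound the running time of MSIR by showing that each of its three phases (initial graph construction, the main loop, and the final matching extraction) runs in time polynomial in $n$ and $m$. I would assume the classical fact that a maximum weight perfect matching in a bipartite graph with $N$ vertices and $E$ edges can be computed in polynomial time, for example in $O(N^3)$ via the Hungarian algorithm.

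First I would analyze the construction of $G=(A\cup B, E, w)$. We have $|A|, |B| \le n + m$, and the edge rules each contribute at most $O(nm)$ edges (each agent has at most $m$ candidate houses, and the dummy vertices contribute $O((n+m)\cdot k)$ edges with $k=|n-m|$). Thus the initial graph has size polynomial in $n$ and $m$, and can be built in polynomial time by iterating over agents and houses and checking acceptability.

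Next I would bound the main loop. The loop iterates exactly $n$ times (once per agent in the permutation $\pi$). In each iteration, the algorithm: (i) removes the at most $O(m)$ weight-$0$ edges incident to $\pi(i)$; (ii) computes a maximum weight perfect matching of the updated graph; and (iii) either keeps the change or restores the removed edges and resets $t_i$. Step (ii) dominates and takes polynomial time by the assumed bipartite matching subroutine; steps (i) and (iii) take $O(m)$ time. Hence each iteration is polynomial, and $n$ iterations give an overall polynomial running time for the loop. Finally, computing the terminating matching $M$ and reading off the allocation $x$ also takes polynomial time.

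Combining these observations, the total running time is dominated by $O(n)$ maximum weight bipartite matching computations on a graph of polynomial size, which is polynomial overall. The only minor obstacle is simply recalling that maximum weight bipartite matching admits a polynomial-time algorithm and verifying that the bookkeeping of edge removals/restorations between iterations is itself cheap; both are standard.
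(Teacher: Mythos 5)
Your proposal is correct and follows essentially the same argument as the paper's proof: the graph has $O(n+m)$ vertices, the main loop performs $O(n)$ modifications, and each iteration is dominated by a maximum weight perfect matching computation, which the paper also bounds by $O({|V(G)|}^3)$ via a standard polynomial-time algorithm. Your version merely adds routine detail (edge counts, bookkeeping costs) that the paper leaves implicit.
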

                \begin{proof}
                    The graph $G$ has $O(n+m)$ vertices. The graph is modified at most $O(n)$ times. Each time, a maximum weight perfect matching is computed that takes polynomial time ${|V(G)|}^3$~\citep{KoVy06a}. 
                    \end{proof}

                \begin{proposition}
                    MSIR returns an allocation that is SIR.
                    \end{proposition}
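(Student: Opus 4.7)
The plan is to derive SIR directly from the edge-construction rules of step~1 of MSIR, since those rules are designed precisely so that every edge incident to a real agent $i$ in $G$ connects $i$ to a house SIR permits $i$ to receive. Because the outer loop only deletes weight-$0$ edges (and restores them whenever the deletion would destroy the perfect-matching property or drop the matching weight below $W$), the final matching $M$ is a perfect matching of a subgraph of the original $G$, so the same incidence restriction applies to the returned allocation $x$.

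Concretely, I would perform a case analysis on $i\in N$ according to endowment type. \textbf{Case (a):} If $\w(i)\in A_i$, inspection of the insertion rules of step~1 shows that the only edge incident to $i$ is $\{i,\w(i)\}$; in particular $i$ is adjacent to no other real or dummy vertex. Hence in any perfect matching of $G$, $i$ is matched to $\w(i)$, so $x(i)=\w(i)$ and SIR is trivial. \textbf{Case (b):} If $\w(i)\notin A_i$ and $\w(i)\neq\text{null}$, the only edges incident to $i$ are $\{i,\w(i)\}$ (weight~$0$) and $\{i,b\}$ for $b\in A_i\setminus\{\w(i)\}$ (weight~$1$); notably, no edge from $i$ to any dummy house is inserted. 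Therefore $x(i)\in\{\w(i)\}\cup A_i$, giving either $x(i)=\w(i)$ or $x(i)\succ_i\w(i)$. \textbf{Case (c):} If $\w(i)=\text{null}$, then $x(i)$ is either a dummy house---in which case the allocation is null, matching $\w(i)$---or a real house, for which SIR reduces to the usual convention for agents without endowment combined with the maximum-weight criterion.

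I do not expect any serious obstacle here: the real design work has been done in step~1, and this proposition amounts to verifying that the edge-insertion rules respect SIR by construction, together with the trivial observation that deleting weight-$0$ edges in the outer loop can only tighten, never loosen, this incidence restriction.
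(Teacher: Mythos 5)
Your proof is correct and takes essentially the same approach as the paper's: both derive SIR from the incidence structure of the graph built in step~1 (an endowed agent is adjacent only to his endowment and to acceptable houses) together with the fact that the loop only deletes edges while preserving the existence of a perfect matching. Your version is merely a more explicit case analysis, additionally spelling out the null-endowment case, which the paper's proof leaves implicit under its convention that an unacceptable house is equivalent to no house.
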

                    \begin{proof}
                        Throughout the algorithm, we make sure that $G$ admits a perfect matching. If a modification to $G$ leads to a lack of a perfect matching, then such a modification is reversed. 
An agent with an endowment only has an edge to his endowment or to an acceptable house. Therefore while a perfect matching exists, an agent with an unacceptable endowment cannot be matched to an unacceptable house other than his endowment. Thus MSR  returns an allocation that is SIR.
                        \end{proof}
                    
                  \begin{proposition}
                      MSIR returns an allocation that satisfies the maximum number of agents subject to SIR.
                      \end{proposition}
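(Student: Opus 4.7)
The plan is to reduce the problem to a bipartite matching problem and exploit the invariant that the algorithm never lets the maximum weight drop below the initial value $W$.

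First, I would establish a tight correspondence between SIR allocations and perfect matchings of the initial graph $G$. By construction, an agent whose endowment is acceptable has only the edge to his endowment; an agent with an unacceptable endowment has edges only to that endowment (weight $0$) and to acceptable houses (weight $1$); an endowment-less agent has edges to every house, with the weight indicating acceptability; and dummy agents/houses carry only weight-$0$ edges. So a perfect matching in $G$ either leaves an agent at his endowment or moves him to something strictly better (either an acceptable house or, for endowment-less agents, any house), which is precisely the SIR condition. Conversely, any SIR allocation, padded with dummy vertices when $n \neq m$, yields a perfect matching of $G$. Under this correspondence, an edge of weight $1$ in the matching corresponds exactly to a satisfied agent, so the weight of the matching equals the number of satisfied agents in the associated SIR allocation.

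Second, I would observe that the initial value $W$ computed in the algorithm is, by the correspondence, equal to the maximum number of agents that can be satisfied under SIR. Call this quantity $W^*$. Then I would argue that $W$ is an invariant of the loop: at every iteration, modifications to $G$ (removing zero-weight edges incident to $\pi(i)$) are retained only if a perfect matching of weight $W$ still exists, and are reverted otherwise. Hence after the loop, $G$ still admits a perfect matching of weight $W = W^*$, and any such matching is an SIR allocation that satisfies $W^*$ agents.

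Finally, I would conclude that the allocation $x$ returned in the last step, being induced by a maximum weight perfect matching of the final $G$, has weight $W^*$, and therefore satisfies the maximum possible number of agents subject to SIR.

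The main obstacle is the bookkeeping for the bipartite correspondence, particularly when $n \neq m$ and when agents have no endowment: I need to verify that in every such case the edge set really does encode SIR exactly (no spurious SIR allocations are excluded, and no non-SIR allocations are allowed), so that $W = W^*$. Once this correspondence is nailed down, the rest is immediate from the explicit loop invariant.
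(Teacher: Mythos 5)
Your proposal is correct and follows essentially the same route as the paper's own proof: perfect matchings of $G$ encode exactly the SIR allocations (via the restricted edge sets for agents with acceptable and unacceptable endowments), matching weight equals the number of satisfied agents, and the loop preserves the existence of a weight-$W$ perfect matching. Your write-up is in fact more careful than the paper's, which leaves the two-way correspondence and the weight invariant implicit, but there is no substantive difference in approach.
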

                      \begin{proof}
                           Throughout the algorithm, we make sure that $G$ admits a perfect matching which ensures that the corresponding allocation satisfies S-IR. An agent with an acceptable endowment does not have an edge to any other house so he has to be allocated his endowment. An agent with an unacceptable endowment either has an edge to his endowment or to houses that are acceptable. Hence, in a perfect matching, the agent either gets his endowment or an acceptable house.
 Given this condition, we compute the maximum weight perfect matching. This implies that the corresponding allocation satisfies the maximum number of agents under the SIR constraint. 
                          \end{proof}

                      \begin{corollary}
                          MSIR returns an allocation that is Pareto optimal among the set of SIR allocations.
                          \end{corollary}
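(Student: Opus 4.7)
The plan is to derive this corollary directly from the preceding proposition by exploiting the special structure of $0$-$1$ utilities, under which Pareto dominance forces a strict increase in welfare.

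More concretely, let $x$ be the allocation returned by MSIR, and suppose for contradiction that there exists an SIR allocation $y$ that Pareto dominates $x$. By the definition of Pareto domination, $y(i) \pref_i x(i)$ for every $i \in N$, and there exists some $j \in N$ with $y(j) \spref_j x(j)$. First I would translate each weak preference into the corresponding inequality on $0$-$1$ utilities: $u_i(y(i)) \geq u_i(x(i))$ for every $i$, and $u_j(y(j)) > u_j(x(j))$. Since utilities are integer-valued in $\{0,1\}$, the strict inequality forces $u_j(y(j)) = 1$ and $u_j(x(j)) = 0$, so agent $j$ contributes strictly more welfare under $y$ than under $x$, while no other agent contributes less. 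Summing over $N$ gives that the number of satisfied agents under $y$ strictly exceeds that under $x$.

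Next I would invoke the previous proposition, which states that MSIR satisfies the maximum number of agents subject to SIR. Since $y$ is SIR and satisfies strictly more agents than $x$, this contradicts the maximality guarantee for $x$. Therefore no such $y$ exists, and $x$ is Pareto optimal within the set of SIR allocations.

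There is essentially no obstacle here; the corollary is just the familiar observation that under dichotomous preferences, welfare maximization over a class of feasible allocations implies Pareto optimality within that class. The only subtlety worth flagging is that Pareto dominance is taken relative to the set of SIR allocations, which is exactly the set over which the maximum was taken in the previous proposition, so the restriction is harmless.
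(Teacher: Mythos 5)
Your proof is correct and follows the same route as the paper: the paper's own argument is exactly the contradiction via the preceding proposition (a Pareto-dominating SIR allocation would satisfy strictly more agents, contradicting maximality), only stated more tersely. Your elaboration of why Pareto dominance under $0$-$1$ utilities forces a strict increase in the number of satisfied agents is just the detail the paper leaves implicit.
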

                          \begin{proof}
                              Assume for contradiction that the MSIR allocation is Pareto dominated by an SIR allocation. But this means that the MSIR allocation does not satisfy the maximum number of agents. 
                              \end{proof}

                \begin{proposition}
                    MSIR is strategyproof.
                    \end{proposition}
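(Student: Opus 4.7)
The plan is to fix $i$, compare the algorithm's true output to its output under a misreport $A_i'$, and derive a contradiction from the assumption that the misreport strictly improves $i$'s true utility. If $\w(i)\in A_i$, Step~1 of the construction gives $i$ the unique edge $(i,\w(i))$ of weight $1$ in the truth graph $G$, so every perfect matching assigns $\w(i)$ to $i$ and $i$ already attains utility $1$. So suppose $\w(i)\notin A_i$ (or $\w(i)$ is null). Let $M,M'$ be the algorithm's output under truth and under the misreport, write $G,G'$ for the corresponding graphs and $W,W'$ for the Step~2 max-weights, and assume $M'(i)=h\in A_i$ while $M(i)\notin A_i$. By construction the edge $(i,M(i))$ has weight $0$ in $G$, and a short sub-case rules out $\w(i)\in A_i'$ (otherwise $(i,\w(i))$ is the unique edge of $i$ in $G'$, of weight $1$, forcing $M'(i)=\w(i)\notin A_i$). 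Hence $\w(i)\notin A_i'$ and $h\neq \w(i)$ forces $h\in A_i'$, so $h\in A_i\cap A_i'$ and $(i,h)$ has weight $1$ in both $G$ and $G'$.

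Next I apply an alternating-cycle exchange. For every $j\neq i$, $G$ and $G'$ have identical edges with identical weights incident to $j$, and $(i,h)$ has weight $1$ in both graphs. Hence $M'$ is a perfect matching of $G$ of weight $W'$, so $W\geq W'$; if in addition $M(i)\in A_i'$, then $M$ would be a perfect matching of $G'$ of weight $W+1>W\geq W'$, contradicting the maximality of $W'$, so $M(i)\notin A_i'$ and $(i,M(i))$ has weight $0$ in both graphs as well. It follows that $M$ is a perfect matching of $G'$ of weight $W$, so $W=W'$. The symmetric difference $M\triangle M'$ decomposes into alternating cycles, and since $M(i)\neq M'(i)$, some cycle $C$ passes through $i$; all edges of $C$ have the same weight in $G$ and in $G'$. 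Define $\widehat M=(M\setminus C)\cup (M'\cap C)$: a perfect matching of $G$ with $\widehat M(i)=h$. Two-sided optimality of $M$ in $G$ and of $M'$ in $G'$ forces the weight of $M\cap C$ to equal that of $M'\cap C$, so $\widehat M$ is itself a max-weight perfect matching of $G$ that satisfies $i$.

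It remains to argue that this property survives the edge-deletion loop of the algorithm. I would prove by induction on $j=1,\ldots,k-1$ (where $i=\pi(k)$) that the Boolean $t_{\pi(j)}$ computed under truth equals the one under the misreport; the inductive step uses the same cycle-swap between the two current (partially pruned) graphs, which by the induction hypothesis still differ only at $i$'s edges. Consequently, just before $i$ is processed, the truth-current graph admits a max-weight perfect matching satisfying $i$ (obtained by cycle-swapping $M'$ from the lie-current graph), so $t_i=1$ under truth and $i$ is actually satisfied under truth, contradicting the assumption.

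The main obstacle is precisely this last inductive step: one must verify that the cycle $C$ used to transfer the witness matching lies entirely in the current (partially pruned) graph, so that the swap is legal. A clean way to bypass this bookkeeping is to observe that the family $\mathcal F=\{T\subseteq N : \text{some max-weight perfect matching of } G \text{ satisfies every agent in } T\}$ has a (transversal-type) matroid structure, which is identical for $G$ and $G'$ under the present hypotheses, and on which the algorithm implements the greedy procedure with priority $\pi$; uniqueness of the greedy basis then implies that the satisfied sets coincide under truth and the misreport, so in particular $i$ is satisfied under truth, the desired contradiction.
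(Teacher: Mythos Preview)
Your argument follows the same outline as the paper's: assume a profitable deviation, observe that the misreport outcome $x'$ already serves as a witness in the truthful instance, and derive a contradiction with the fact that at $i$'s turn no admissible matching of weight $W$ satisfies $i$. You supply considerably more detail than the paper does --- in particular your derivation of $W=W'$ and of $(i,M(i))$ having weight~$0$ in both graphs is careful and correct, whereas the paper simply asserts that ``$x'$ is a feasible maximum weight perfect matching even when $i$ tells the truth.''

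You also correctly put your finger on the one genuine obstacle that the paper's proof glosses over: to transport the witness $x'$ into the truthful run one needs the flags $t_{\pi(1)},\dots,t_{\pi(k-1)}$ to coincide under truth and under the misreport, so that at $i$'s turn the two pruned graphs differ only at $i$'s own edges. The paper's proof implicitly uses this but never argues it. Your proposed cycle-swap induction runs into exactly the difficulty you name (the exchange cycle through $i$ may pass through $\pi(j)$, so the swap can undo $\pi(j)$'s satisfaction), and your matroid shortcut is the right idea but is stated rather than proved: it is not immediate that $\mathcal F=\{T:\text{some max-weight perfect matching of }G\text{ satisfies all of }T\}$ is a matroid, because the SIR edge constraints and the \emph{perfect}-matching requirement mean this is not literally a transversal matroid. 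One route to make it rigorous is to pass to the tight subgraph for an optimal dual and argue basis exchange there (so that a single alternating cycle witnesses the swap of exactly one satisfied agent for another); once that lemma is in hand, your greedy-on-a-matroid conclusion is exactly what is needed and cleanly yields both the inductive invariance of the $t_{\pi(j)}$ and strategyproofness.
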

                                   \begin{proof}
MSIR returns a perfect matching of weight $W$. During the running of MSIR, each time a modification is made to the graph $G$, it is is ensured that $G$ admits a perfect matching of weight $W$. 
               Assume for contradiction that MSIR is not strategyproof and some agent $i\in N$ with turn $k$ in permutation $\pi$ gets a more preferred house when he misreports. This means that agent $i$ gets an unacceptable house when he reports the truthful preference $A_i$. Let the allocation be $x$.  
This implies that in permutation $\pi$, when $i$'s turn comes, there exists no feasible maximum weight perfect matching of size $W$ in which $i$ gets an acceptable house and each agent preceding $i$ in permutation $\pi$ gets $t_i$ acceptable houses. Since $i$ can get a more preferred house by misreporting, $i$ gets an acceptable house if he reports $A_i'$. Let such an allocation be $x'$. Note that $x'$ is a feasible maximum weight perfect matching even when $i$ tells the truth and even if each agent preceding $i$ in permutation $\pi$ gets $t_i$ acceptable houses. But this is a contradiction because there does exist a feasible maximum weight perfect matching of size $W$ in which $i$ gets an acceptable house and each agent preceding $i$ in permutation $\pi$ gets $t_i$ acceptable houses.
                                       \end{proof}

                  \begin{lemma}
                      For 1-0 utilities, any S-IR welfare maximizing allocation is core stable. 
                      \end{lemma}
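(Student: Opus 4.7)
The plan is to argue by contradiction: assume $x$ is an S-IR allocation maximizing welfare subject to S-IR, and suppose some coalition $S\subseteq N$ blocks $x$ via an allocation $y$ on $S$ with $y(i)\in \w(S)$ and $y(i)\succ_i x(i)$ for all $i\in S$. I will derive a strictly better S-IR allocation, contradicting welfare-maximality.

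First I would exploit 1-0 utilities to translate the block. Since $y(i)\succ_i x(i)$ and values lie in $\{0,1\}$, we must have $y(i)\in A_i$ and $x(i)\notin A_i$ for every $i\in S$. I then use S-IR of $x$: the condition $x(i)=\w(i)$ or $x(i)\succ_i \w(i)$, combined with $x(i)$ being unacceptable to $i$, rules out the strict case (which would force $x(i)\in A_i$). Hence $x(i)=\w(i)$ for all $i\in S$, and in particular $\w(i)\notin A_i$ for $i\in S$.

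Next I would define the hybrid allocation $x'$ by $x'(i)=y(i)$ for $i\in S$ and $x'(i)=x(i)$ for $i\notin S$. The key step is checking that $x'$ is a valid allocation: the houses assigned to $S$ under $y$ lie in $\w(S)$, while the houses assigned to $N\setminus S$ under $x$ are disjoint from $\w(S)$ precisely because $x$ already assigns each $\w(i)$, $i\in S$, to $i$ itself. So no house is used twice. For S-IR of $x'$: outside $S$ the constraint is inherited from $x$; inside $S$ we have $x'(i)=y(i)\in A_i$ whereas $\w(i)\notin A_i$, so $x'(i)\succ_i \w(i)$.

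Finally I would compare welfare. No agent in $S$ is satisfied under $x$ (since $x(i)\notin A_i$), but every agent in $S$ is satisfied under $x'$; agents outside $S$ have the same allocation under $x$ and $x'$. Hence the welfare of $x'$ exceeds that of $x$ by exactly $|S|\geq 1$, contradicting the hypothesis that $x$ maximizes welfare subject to S-IR. The main obstacle is the bookkeeping in the verification that $x'$ is a well-defined allocation and that S-IR is preserved on $S$; both reductions hinge on the same observation that $x(i)=\w(i)$ for $i\in S$, which is the technical heart of the argument.
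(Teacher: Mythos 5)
Your proof is correct and follows essentially the same route as the paper's: use S-IR together with dichotomous preferences to conclude that every member of a blocking coalition keeps his own (unacceptable) endowment, then reallocate within $S$ to obtain a strictly better S-IR allocation, contradicting welfare maximality. You are in fact more careful than the paper, which leaves the hybrid allocation $x'$ and its feasibility/S-IR verification implicit, and your derivation of $x(i)=\w(i)$ for $i\in S$ directly from S-IR also subsumes the paper's separate side argument about agents with no endowment.
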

                      \begin{proof}
  Assume that there is a blocking coalition $S$. It can only consist of agents who did not get an acceptable house in the allocation. S-IR implies that agents who originally own an acceptable house keep the acceptable house.  If an agent $i\in N$ is in $S$ who does not originally own any house, he cannot be part of $S$ because he nas nothing to give to other agents, so other agents can satisfy each other without letting $i$ be a member of $S$. Therefore, $S$ consists of those agents who owned an unacceptable house and are allocated an unacceptable house. Due to S-IR, agents in $S$ are allocated their own house. Now if $S$ admits a blocking coalition this implies that the S-IR welfare maximizing allocation was not  S-IR welfare maximizing which is a contradiction.                        
                          \end{proof}
              
                          \begin{proposition}
                              MIR returns an allocation that is core stable. 
                              \end{proposition}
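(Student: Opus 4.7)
The plan is to combine results already established in this section rather than to do anything substantive from scratch. From the preceding propositions, the allocation $x$ returned by the algorithm is strongly individually rational and satisfies the maximum number of agents among all S-IR allocations. Under 1-0 utilities, ``satisfies the maximum number of agents'' coincides with ``maximizes welfare,'' so $x$ is an S-IR welfare maximizing allocation in the exact sense used by the lemma immediately above (I read the proposition as referring to the S-IR algorithm, since only it returns an S-IR allocation, which is the hypothesis of the lemma).

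The lemma then directly asserts that any such $x$ is core stable, which gives the proposition. In the endowment setting singled out by the surrounding theorem no additional argument is required: the lemma's proof already handles both agents who own an acceptable house (forced by S-IR to keep it) and agents without endowments (who contribute nothing to a potential blocking coalition), so restricting to a full-endowment market only simplifies the picture. The agents whose endowment is unacceptable remain with their endowment under $x$, and any coalition $S$ that could block $x$ would consist solely of such agents re-shuffling their own unacceptable endowments among themselves, which by S-IR welfare maximality cannot strictly improve anyone's assignment.

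The one piece of bookkeeping that has to line up is the identification of the algorithm's objective value $W$ with welfare. This follows because each edge of the bipartite graph $G$ carries weight equal to the utility that the corresponding agent derives from that house; a maximum weight perfect matching therefore maximizes total utility, which under 1-0 utilities is exactly the number of satisfied agents. With that alignment in hand the proposition reduces to a one-line invocation of the lemma, and I do not foresee a substantive obstacle; the main risk is merely a notational mismatch between ``satisfying the maximum number of agents'' in the algorithmic statement and ``welfare maximizing'' in the lemma, which the 1-0 utility convention resolves immediately.
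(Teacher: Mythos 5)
Your proposal is correct and takes essentially the same approach as the paper: the paper states this proposition without any proof immediately after the lemma, intending it as a direct application of that lemma to the algorithm's output, which is exactly your one-line invocation (together with the routine identification of matching weight with welfare under 1-0 utilities). You are also right to read ``MIR'' as a typo for ``MSIR''---the paper's later counterexample showing that an IR welfare-maximizing allocation may fail core stability, and the summary table listing Core as satisfied by MSIR but not MIR, confirm that the proposition can only be about MSIR.
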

 %  \begin{proposition}
 % MSIR is core stable for 1-0 utilities.
 %      \end{proposition}
 %
 \subsection{The MIR Mechanism}
 
     The MSIR mechanism is specified as Algorithm~\ref{algo:mir}.
 
     		\begin{algorithm}[h!]
     	        % \footnotesize
     		  \caption{MIR}
     		  \label{RAR-algo}
     		\renewcommand{\algorithmicrequire}{\wordbox[l]{\textbf{Input}:}{\textbf{Output}:}} 
     		 \renewcommand{\algorithmicensure}{\wordbox[l]{\textbf{Output}:}{\textbf{Output}:}}
     		\begin{algorithmic}
     		%	\small
     			\REQUIRE $(N,H,\succsim ,\w)$
     			\ENSURE 
     		\end{algorithmic}
     		\algsetup{linenodelimiter=\,}
     		  \begin{algorithmic}[1] 
                   \STATE $G=(A\cup B, E,w)$.
              Set $k$ to $|n-m|$. If $m>n$, then $A=N\cup D_N$ where $D_N=\{d_1,\ldots, d_k\}$. If $n>m$, then $B=H\cup D_H$ where $D_H=\{o_1,\ldots, o_k\}$. $E$ is defined as follows:
                   \begin{itemize}
                       \item For $a\in N$ and $b=\w(a)$, add $\{a,b\}\in E$ with $w(\{a,b\})=0$ if $b\notin A_a$ and $w(\{a,b\})=1$ if $b\in A_a$.
                      \item For $a\in N$ and $b=H\setminus \{\w(a)\}$ such that $b\in A_a$, add $\{a,b\}\in E$ with $w(\{a,b\})=1$.
                         \item For $a\in N$ such that $\w(a)=null$ and $b\in B$, add $\{a,b\}\in E$ 
with $w(\{a,b\})=0$ if $b\notin A_a$ and $w(\{a,b\})=1$ if $b\in A_a$.                        
                       %   $w(\{a,b\})=0$.
                         \item For $a\in N$ such that $\w(a)=h$ and $h\notin A_a$ and each $b\in B$, add $\{a,b\}\in E$ with
with $w(\{a,b\})=0$ if $b\notin A_a$ and $w(\{a,b\})=1$ if $b\in A_a$.                        
                          %$w(\{a,b\})=0$.
                         %  \item For $b\in D_H$ and each $a\in N$, add  $\{a,b\}\in E$ with $w(\{a,b\})=0$.
                         \item For $b\in D_H$ and each $a\in \{a\in N\midd \w(a)\notin A_a\}$, add  $\{a,b\}\in E$ with $w(\{a,b\})=0$.
                                        \item For $a\in D_N$ and each $b\in B$, add  $\{a,b\}\in E$ with $w(\{a,b\})=0$.
                   \end{itemize}
               \COMMENT{Note that an agent with an acceptable endowment only has an edge to his endowment or to an acceptable house. An agent with an unacceptable endowment has an edge to every $b\in B$.}
                   % \begin{itemize}
     %                   \item $\{a,b\}\in E$ if $a\in N$, $b\in H$, and $u_a(b)=1$. In that case $w(\{a,b\})=1$.
     %               \item $\{a,b\}\in E$ if $a\in N$, $b\in H$, and $b= \w(a)$. In that case $w(\{a,b\})=u_a(b)$.
     %               \item $\{a,b\}$ for $a\in D_N$ and $b\in B$. In that case $w(\{a,b\})=0$.
     %                                     \item $\{a,b\}$ for $b\in D_H$ and $a\in A$. In that case $w(\{a,b\})=0$.
     %               \end{itemize}
                   \STATE Compute the maximum weight perfect matching of $G$. Let the weight of the matching be $W$.
                   \STATE Take permutation $\pi$ of $N$
                   \FOR{$i$=1 to $n$}
                       \STATE Set $t_i$ to 1.
                   \STATE Remove from $G$ each edge $\{\pi(i),b\}\in E$ such that
                    $w(\{\pi(i),b)\})=0$. Compute the maximum weight perfect matching of $G$. If the weight is less than $W$ or if there does not exist a perfect matching, then put  back the removed edges and set $t_i$ to 0.                        
                   % \IF{$\w(\pi(i)\neq null$}
       %             \STATE Remove edge $\{\pi(i),\w(\pi(i))\}$ from $G$ if $u_{\pi(i)}(\w(\pi(i)))=0$. Compute the maximum weight perfect matching of $G$. If the weight is less than $W$ or if there does not exist a perfect matching, then put back the edge $\{\pi(i),\w(\pi(i))\}$.
       %             \ELSE
       %            \STATE Remove each edge $\{\pi(i),b)\}$ from $G$ if
       %             $w(\{\pi(i),b)\})=0$. Compute the maximum weight perfect matching of $G$. If the weight is less than $W$ or if there does not exist a perfect matching, then put back the edge $\{\pi(i),\w(\pi(i))\}$.
       %             \ENDIF
                   \ENDFOR
                  \STATE Compute the maximum weight perfect matching $M$ of $G$. Consider the allocation $x$ in which each agent $i\in N$ gets a house that it is matched to in $M$. If $i\in N$ is matched to a dummy house, its allocation is null.  
       \RETURN $(x(1),\ldots, x(n))$
     		 \end{algorithmic}
     		\label{algo:mir}
     		\end{algorithm}

            \begin{proposition}
                MIR runs in polynomial-time. % $O({(n+m)}^3n)$.
                \end{proposition}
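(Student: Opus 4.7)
The plan is to mirror the polynomial-time argument already given for MSIR, since MIR has essentially the same skeleton: build a bipartite graph once, then iterate over the agents in a fixed permutation, each iteration solving one maximum weight perfect matching problem on a slightly modified graph.

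First, I would bound the size of the graph $G=(A\cup B,E,w)$ constructed in Step~1. Because we pad with at most $|n-m|$ dummy vertices on whichever side is smaller, $|A|=|B|=\max\{n,m\}$, and the number of edges is at most $|A|\cdot|B|=O((n+m)^2)$. Constructing $G$ therefore takes polynomial time in $n+m$.

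Next, I would account for the main loop. The \textbf{for} loop in Steps~4--7 runs exactly $n$ iterations. Each iteration performs three polynomial-time operations: (a) removing the zero-weight edges incident to $\pi(i)$, which is $O(|B|)$; (b) computing one maximum weight perfect matching of $G$, which takes $O(|V(G)|^3)$ time by the Hungarian algorithm~\citep{KoVy06a}; and (c) possibly restoring the removed edges, again $O(|B|)$. Step~8 then computes one final maximum weight perfect matching, which is also polynomial. Summing, the overall running time is $O(n\cdot (n+m)^3)$, which is polynomial in the input size.

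The only step requiring any care is observing that the repeated matching computations remain well-defined after edge deletions: as in MSIR, whenever a deletion destroys the perfect matching property or lowers the optimal weight below $W$, the deleted edges are restored, so the invariant that $G$ admits a maximum weight perfect matching of weight $W$ is maintained throughout. Checking this invariant is itself just one matching computation per iteration and therefore does not change the polynomial bound. I do not anticipate any real obstacle; the argument is essentially the same as for MSIR, with the only change being the slightly richer edge set in the MIR construction, which does not affect the asymptotic analysis.
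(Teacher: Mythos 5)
Your proposal is correct and follows essentially the same argument as the paper's own proof: the graph has $O(n+m)$ vertices, the loop performs $O(n)$ modifications, and each iteration computes a maximum weight perfect matching in $O(|V(G)|^3)$ time, giving a polynomial bound overall. The extra details you add (edge count, restoration cost, the weight-$W$ invariant) are harmless elaborations of the same approach.
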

                \begin{proof}
                    The graph $G$ has $O(n+m)$ vertices. The graph is modified at most $O(n)$ times. Each time, a maximum weight perfect matching is computed that takes time ${|V(G)|}^3$~\citep{KoVy06a}. 
                    \end{proof}
     
     \begin{proposition}
         MIR returns an allocation that is IR.
         \end{proposition}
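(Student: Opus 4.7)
The plan is to verify the IR condition $x(a) \pref_a \w(a)$ for every agent $a \in N$ by a direct case analysis on the status of $\w(a)$, exploiting exactly which edges are incident to $a$ in the constructed graph $G$. The algorithm returns an allocation derived from a perfect matching of $G$, so it suffices to show that \emph{every} perfect matching of $G$ induces an IR allocation.

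First I would take an agent $a \in N$ with acceptable endowment, $\w(a) \in A_a$. Inspecting the edge-construction rules of MIR, the only edges at $a$ go to $\w(a)$ (weight $1$) and to other acceptable houses $b \in H \setminus \{\w(a)\}$ with $b \in A_a$ (also weight $1$); crucially, the dummy houses in $D_H$ are wired only to agents with unacceptable endowments, so $a$ has no edge to $D_H$ either. Consequently, in any perfect matching of $G$ the agent $a$ is allocated some house in $A_a$, giving utility $1$, so $x(a) \pref_a \w(a)$.

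Next I would take an agent $a$ with unacceptable endowment, $\w(a) \notin A_a$, so $\w(a)$ gives $a$ utility $0$ under the $1$--$0$ utility convention. Any element of $B$ that $a$ might be matched to gives utility in $\{0,1\}$, and being matched to a dummy vertex corresponds to receiving the null allocation, which by the paper's convention also has utility $0$. In every case the utility of $x(a)$ is at least $0$, so $x(a) \pref_a \w(a)$ holds automatically. The final case, where $a$ has no endowment at all, is trivially IR since there is nothing to improve upon.

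Since these cases are exhaustive, the allocation returned by MIR is IR. There is no real obstacle in the argument: the whole design of the edge set in MIR is precisely to make IR an automatic consequence of producing a perfect matching of $G$. The only point requiring care is confirming that agents with acceptable endowments are barred from both unacceptable houses and dummy vertices, which is exactly what the construction enforces.
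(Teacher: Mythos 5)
Your proof is correct and follows essentially the same route as the paper's: both arguments rest on the observation that in the MIR graph an agent with an acceptable endowment has edges only to acceptable houses (and, as you note explicitly, none to dummy vertices), so any perfect matching satisfies such agents, while all other agents are IR trivially under 1--0 utilities. Your write-up is somewhat more explicit than the paper's about the dummy-house wiring and the unacceptable-endowment and null-endowment cases, but the underlying idea is identical.
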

                             \begin{proof}
                                 Throughout the algorithm, we make sure that $G$ admits a perfect matching. If a modification to $G$ leads to a lack of a perfect matching, then such a modification is reversed. 
         An agent with an acceptable endowment only has an edge to his endowment or to other acceptable houses. Therefore while a perfect matching exists, an agent with an acceptable endowment can only be matched to an acceptable house. Thus MSR  returns an allocation that is IR.
                                 \end{proof}
         
       \begin{proposition}
           MIR returns an allocation that satisfies the maximum number of agents subject to IR.
           \end{proposition}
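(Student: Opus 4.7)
The plan is to establish a tight correspondence between perfect matchings in the bipartite graph $G$ and IR allocations, such that the weight of any perfect matching equals the number of satisfied agents in the corresponding allocation. Once this correspondence is in place, the initial maximum matching weight $W$ equals the maximum number of agents that can simultaneously be satisfied by any IR allocation, and the algorithm's invariant of preserving $W$ throughout its edge-removal phase yields the conclusion.

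First, I would argue that every perfect matching of $G$ induces an IR allocation whose social welfare equals the matching weight. The key observation is the edge set construction: an agent $i$ with an acceptable endowment has edges only to $\omega(i)$ and to houses in $A_i$, so any perfect matching assigns such an agent either $\omega(i)$ or some other acceptable house, both of which satisfy IR. An agent $i$ with an unacceptable endowment has an edge to every house and dummy, but since both $\omega(i)$ and every other unacceptable house contribute weight $0$, IR holds trivially by indifference among weight-$0$ options. Agents with $\omega(i)=null$ satisfy IR vacuously. Since an acceptable assignment contributes weight $1$ and an unacceptable or dummy assignment contributes weight $0$, the matching weight is exactly the number of satisfied agents.

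Conversely, every IR allocation $y$ can be encoded as a perfect matching in $G$ of weight equal to the number of agents satisfied by $y$: match $i$ to $y(i)$ when $y(i)\neq null$ and to a dummy in $D_H$ otherwise, then match any unmatched houses to dummies in $D_N$. IR guarantees that for agents with acceptable endowments, $y(i)\in A_i\cup\{\omega(i)\}\subseteq A_i$, so the edge $\{i,y(i)\}$ is present; for all other agents, $\{i,y(i)\}$ is present by construction. It follows that the maximum weight $W$ of any perfect matching in $G$ equals the maximum number of agents satisfiable under any IR allocation.

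Finally, I would invoke the invariant maintained by the main loop: before and after each iteration, $G$ admits a perfect matching of weight exactly $W$, since any edge removal that would strictly decrease this optimum is immediately reversed. The terminal maximum weight perfect matching $M$ therefore still has weight $W$, and by the correspondence above the induced allocation is IR and satisfies exactly $W$ agents, i.e., the maximum possible subject to IR. The main obstacle is purely bookkeeping rather than conceptual: one must verify the bidirectional correspondence carefully when $m\neq n$, tracking how the dummies in $D_N$ and $D_H$ absorb surplus agents or houses without affecting weight; once handled, the result follows from the IR property already proved and the weight-preservation invariant of the loop.
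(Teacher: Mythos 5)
Your proof is correct and takes essentially the same approach as the paper's: the paper's (much terser) argument rests on exactly the correspondence you establish between perfect matchings of $G$ and IR allocations, with matching weight equal to the number of satisfied agents, together with the invariant that a weight-$W$ perfect matching is preserved across every retained edge removal. You merely spell out the converse direction (that every IR allocation embeds as a perfect matching of at least equal weight, so $W$ really is the IR-constrained optimum), which the paper asserts implicitly without proof.
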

           \begin{proof}
                Throughout the algorithm, we make sure that $G$ admits a perfect matching which ensures that the corresponding allocation satisfies IR. Given this condition, we compute the maximum weight perfect matching. This implies that the corresponding allocation satisfies the maximum number of agents under the IR constraint.

               \end{proof}

               \begin{lemma}
                  An allocation that maximizes welfare subject subject to IR has the same welfare even if IR is not imposed. 
                   \end{lemma}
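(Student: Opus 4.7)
The plan is to translate the question into bipartite matching. Let $G = (N \cup H, E)$ be the bipartite graph with $\{i,h\} \in E$ iff $h \in A_i$. Any allocation $x$ induces a matching $\{\{i,x(i)\} : x(i) \in A_i\}$ in $G$ whose size equals the welfare of $x$, and conversely any matching $M$ in $G$ extends to an allocation of welfare $|M|$ by giving the remaining houses to the remaining agents arbitrarily. Writing $V_0 = \{i \in N : \w(i) \in A_i\}$ for the agents with an acceptable endowment, the IR constraint on an allocation is equivalent to requiring that the induced matching saturate $V_0$. It therefore suffices to prove that $G$ admits a maximum matching that saturates $V_0$.

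Let $M^*$ be any maximum matching of $G$ and set $M_0 = \{\{i,\w(i)\} : i \in V_0\}$, which is itself a matching in $G$ by the definition of $V_0$. I will iteratively modify $M^*$, keeping its cardinality fixed, until $V_0$ is saturated. Suppose some $i \in V_0$ is unmatched by $M^*$. The symmetric difference $M_0 \triangle M^*$ decomposes into vertex-disjoint paths and even cycles; since $i$ is covered by $M_0$ but not by $M^*$, it cannot lie on a cycle, so it is the endpoint of a nontrivial path $P$ whose first edge is $\{i,\w(i)\}$ and which alternates between $M_0$- and $M^*$-edges. If $P$ ended at a vertex unmatched by $M^*$, then $P$ would be an $M^*$-augmenting path, contradicting the maximality of $M^*$. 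Hence $P$ ends with an $M^*$-edge at some agent $j$; since $P$ is simple and uses no $M_0$-edge at $j$, we must have $j \notin V_0$. Swapping the $M^*$-edges of $P$ for its $M_0$-edges produces a matching $M'$ of the same cardinality in which $i$ is matched to $\w(i)$, every other $V_0$-vertex on $P$ remains matched (now to its endowment), and the only newly unmatched vertex is $j \notin V_0$.

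Iterating this swap strictly reduces the number of uncovered $V_0$-agents while preserving the matching size, so after finitely many rounds we obtain a maximum matching saturating $V_0$. The associated allocation is IR and achieves the unconstrained maximum welfare, proving the lemma. The main obstacle is showing that the alternating path $P$ must terminate at an agent outside $V_0$ rather than at a vertex unmatched by $M^*$; once this is handled via the maximality of $M^*$ combined with the simplicity of $P$, the remaining bookkeeping about which vertices stay matched during the swap is routine.
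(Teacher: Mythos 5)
Your proof is correct, but it takes a genuinely different and in fact more rigorous route than the paper. The paper argues by direct counting: it splits $N$ into the set $A$ of agents with acceptable endowments and $B=N\setminus A$, notes that the IR-optimal welfare is $|A|+c$ where $c$ is the maximum number of satisfiable agents in $B$ given that all of $A$ is satisfied, and then claims a contradiction from the observation that ``sacrificing'' $k$ agents of $A$ frees up only $k$ houses and so can gain at most $k$ newly satisfied agents in $B$. That last step is left as a heuristic: in an IR allocation the satisfied agents of $A$ need not occupy their own endowments, so the assertion that dropping $k$ of them releases exactly $k$ houses for $B$ is precisely the exchange claim that needs proof. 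Your argument supplies that missing machinery. By reformulating welfare as bipartite matching size and IR as saturation of $V_0=\{i : \omega(i)\in A_i\}$, and then running the symmetric-difference decomposition of $M_0=\{\{i,\omega(i)\}: i\in V_0\}$ against a maximum matching $M^*$, you show the stronger structural fact that some maximum matching saturates $V_0$ --- an instance of the exchange property of transversal matroids --- from which the lemma is immediate. The alternating-path analysis is sound: an unsaturated $i\in V_0$ cannot lie on a cycle, the path cannot end $M^*$-unmatched (else it augments $M^*$), and the terminal vertex of the swap is an agent outside $V_0$ by the parity of the path, a small point you assert rather than spell out but which is routine. What the paper's version buys is brevity and intuition; what yours buys is a complete proof, a reusable structural statement, and an explicit polynomial-time construction of the IR welfare-maximizing matching, which meshes well with how the MIR mechanism itself is built from perfect matchings.
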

                   \begin{proof}
                       Let the set of agents who own an acceptable house be $A$ and $B$ is $N\setminus A$. Then an allocation with maximum welfare subject to IR is one that satisfies each agent in $A$ and maximum number of agents $c$ in $B$. Let us assume that without requiring IR, the welfare is more than $|A|+c$. Then this means that there is an allocation that does not satisfy $k$ agents in $A$ but satisfies $c+k+1$ agents in $B$. But this means that by not satisfying $k$ agents in $A$, $k+1$ agents in $B$ can be satisfied. But this is a contradiction because the sacrificing $k$ agents in $A$ can forego exactly $k$ houses for agents in $B$. 
                                          \end{proof}

           \begin{proposition}
               MIR returns an allocation that satisfies the maximum number of agents.
               \end{proposition}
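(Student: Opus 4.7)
The plan is to chain together the two results that immediately precede this proposition. By the previous proposition, MIR returns an allocation whose welfare is the maximum attainable subject to the IR constraint; call this maximum $W_{\textrm{IR}}$. Let $W$ denote the unrestricted maximum welfare, \ie the largest number of agents that can simultaneously be satisfied under \emph{any} allocation (ignoring IR). Clearly $W_{\textrm{IR}} \le W$ since the IR constraint only shrinks the feasible region. The lemma just established asserts the reverse inequality: an IR-optimal allocation achieves the same welfare even when IR is dropped, \ie $W_{\textrm{IR}} = W$. Combining these, the MIR allocation has welfare $W$, which is exactly the statement of the proposition.

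Concretely, I would first invoke the preceding proposition to fix the MIR output $x$ as welfare-optimal subject to IR, and then invoke the lemma to conclude $W_{\textrm{IR}} = W$, so the welfare of $x$ equals $W$.

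There is no real obstacle here; all of the work has already been done in the lemma, whose proof handled the only subtle point, namely that one cannot gain by sacrificing an agent in $A$ (the owners of acceptable endowments) to satisfy more agents in $B = N \setminus A$, because each sacrificed owner frees exactly one house. The present proposition is a one-line corollary of that lemma together with the IR-constrained optimality of MIR, and the proof would be phrased accordingly.
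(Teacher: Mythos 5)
Your proof is correct and follows exactly the route the paper intends: the proposition appears immediately after the lemma precisely so that it follows by chaining the IR-constrained optimality of MIR with the lemma's equality $W_{\textrm{IR}} = W$ (the paper in fact leaves this one-line argument implicit). Nothing is missing.
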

           % \begin{corollary}
%                MSIR returns an allocation that is Pareto optimal among the set of IR allocations.
%                \end{corollary}
               \begin{corollary}
                   MIR returns an allocation that is Pareto optimal.
                   \end{corollary}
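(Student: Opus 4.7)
The plan is to derive Pareto optimality directly from the welfare-maximization property established just above. The key observation I would exploit is that under 1-0 utilities, a Pareto improvement is automatically a strict welfare improvement: any agent whose utility strictly increases must jump from $0$ to $1$, and since no agent's utility decreases under Pareto domination, the total welfare must strictly increase.

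Concretely, I would argue by contradiction. Suppose the allocation $x$ returned by MIR is Pareto dominated by some allocation $y$. Then $y(i)\pref_i x(i)$ for all $i\in N$, and there exists $j\in N$ with $y(j)\spref_j x(j)$. Under the 1-0 utility assumption, $y(j)\spref_j x(j)$ forces agent $j$'s utility to rise from $0$ to $1$, while for every other agent $i$ the utility is non-decreasing. Summing over $N$ gives that the total number of satisfied agents in $y$ exceeds that in $x$ by at least one.

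However, by the preceding proposition, MIR satisfies the maximum number of agents (unconstrained, thanks to the lemma that maximizing welfare subject to IR yields the same welfare as maximizing welfare without IR). Therefore no allocation can satisfy strictly more agents than $x$, contradicting the existence of $y$. Hence $x$ is Pareto optimal.

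The only subtle point — and really the only place where a mistake could creep in — is verifying that the chain of results we rely on does give unconstrained welfare maximization, not just welfare maximization among IR allocations. This is precisely the content of the lemma about IR being welfare-free, so the argument goes through without further obstacles; the proof is essentially a one-line consequence of that lemma and the max-welfare proposition.
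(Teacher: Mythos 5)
Your proof is correct, but it takes a different route from the paper's. The paper argues from the \emph{constrained} optimality result: MIR maximizes welfare subject to IR, hence is Pareto optimal among IR allocations; it then observes that any allocation $y$ Pareto dominating the MIR outcome $x$ must itself be IR, since $y(i)\pref_i x(i)\pref_i \w(i)$ for all $i$ by transitivity, so the dominator lives inside the IR set and contradicts constrained optimality. You instead go through the \emph{unconstrained} welfare-maximization chain: the lemma that imposing IR costs no welfare, hence the proposition that MIR maximizes the number of satisfied agents over all allocations, plus the observation that under 1-0 utilities any Pareto improvement strictly raises total welfare (the strictly-improved agent jumps from $0$ to $1$, nobody drops). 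Both arguments are sound; the trade-off is that the paper's closure argument (Pareto dominators of IR allocations are IR) is self-contained and would work for arbitrary transitive preferences without the dichotomy, whereas your argument leans on the dichotomous structure and on the lemma, but in exchange establishes the stronger conclusion en route --- unconstrained welfare maximality --- from which Pareto optimality falls out in one line. One small point worth making explicit in your write-up: an agent matched to a dummy (null) house has utility $0$, which the paper's convention (``getting an unacceptable house is equivalent to not getting a house'') covers, so your summation over $N$ is legitimate even when some agents receive no house.
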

               \begin{proof}
                   MIR returns an allocation that is Pareto optimal among the set of IR allocations. Now if the allocation were not Pareto optimal, any allocation that Pareto dominated it would be IR as well. But this is a contradiction.  
                   \end{proof}

                \begin{proposition}
                    MIR is strategyproof.
                    \end{proposition}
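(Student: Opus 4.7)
The plan is to mirror the strategyproofness proof of MSIR closely, because MIR uses the same architecture: a greedy loop along a permutation $\pi$ that maintains the invariant that $G$ admits a perfect matching of weight $W$ and, at each step, tries to pin the current agent to a weight-one edge. I would argue by contradiction. Suppose some agent $i$ strictly gains by reporting $A_i'$ instead of the true $A_i$. Under $0/1$ utilities this forces $i$ to receive an unacceptable house under truthful reporting and a truly acceptable house $h^{\ast}$ under the misreport, so necessarily $h^{\ast} \in A_i \cap A_i'$. Let $x'$ denote the allocation MIR produces under the misreport.

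First I would unpack what $x'$ is: a maximum-weight perfect matching of the final misreport graph in which every preceding agent $j$ with $t_j=1$ has been restricted to weight-one edges and $i$ is matched to $h^{\ast} \in A_i'$. Next I would reinterpret $x'$ in the graph $G$ built from the truthful report. Any edge of $x'$ not incident to $i$ is insensitive to $i$'s reported acceptable set and so has identical weight in both graphs. The single edge $(i,h^{\ast})$ used by $x'$ exists with weight $1$ under both reports because $h^{\ast} \in A_i \cap A_i'$; this is where one checks each case of MIR's edge-construction rules according to whether $\omega(i)$ is classified as acceptable or unacceptable under the truth and under the misreport, and it is the membership $h^{\ast} \in A_i \cap A_i'$ that makes the canonical edge $(i,h^{\ast})$ common to both graphs. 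Consequently $x'$ is a perfect matching of $G$ with the same total weight.

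I would then close the argument in the MSIR style. The matching $x'$ is a perfect matching of weight $W$ in $G$ in which $i$ is on a weight-one edge and every preceding agent $j$ with $t_j=1$ is also on a weight-one edge. But this is precisely the witness that MIR failed to find when $i$'s turn came on the truthful run, causing it to set $t_i=0$ and leave $i$ unsatisfied. That contradiction establishes strategyproofness.

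The main obstacle is ensuring that $x'$ is really transplantable in the strong sense needed above. Two things must be aligned between the two runs: (i) the value of $W$, and (ii) the $t_j$ values for $j$ preceding $i$ in $\pi$. I would establish both by induction on the position in $\pi$. Since the truthful and misreport graphs differ only in the weights of edges incident to $i$, and the canonical edge $(i,h^{\ast})$ has weight $1$ under both reports, every maximum-weight perfect matching in one graph can be rerouted through $(i,h^{\ast})$ to yield a matching of the same weight in the other. Hence the maximum-weight tests that set $t_j$ return the same verdict on both runs up to $i$'s turn, and $W$ itself is invariant. With this alignment nailed down, the MSIR-style contradiction closes the proof.
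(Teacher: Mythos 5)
Your proof takes essentially the same route as the paper's: the paper's argument for this proposition is a verbatim copy of its MSIR strategyproofness proof --- assume a profitable misreport, observe that the resulting allocation $x'$ is a perfect matching of weight $W$ in the truthful graph in which $i$ and every earlier agent with $t_j=1$ lie on weight-one edges, contradicting the failed test at $i$'s turn. Your extra care about aligning $W$ and the $t_j$ values across the two runs actually goes beyond the paper, which simply asserts that $x'$ transplants; just be aware that your ``rerouting through $(i,h^{\ast})$'' lemma is asserted rather than proved (rerouting a maximum-weight perfect matching through a prescribed edge does not preserve weight in general), and that $h^{\ast}\in A_i\cap A_i'$ is not forced, since $i$ could be matched to a truly acceptable house along a weight-zero edge of the misreport graph --- though the paper's own proof glosses over exactly these same points.
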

                                                       \begin{proof}
                    MIR returns a perfect matching of weight $W$. During the running of MIR, each time a modification is made to the graph $G$, it is is ensured that $G$ admits a perfect matching of weight $W$. 
                                   Assume for contradiction that MSIR is not strategyproof and some agent $i\in N$ with turn $k$ in permutation $\pi$ gets a more preferred house when he misreports. This means that agent $i$ gets an unacceptable house when he reports the truthful preference $A_i$. Let the allocation be $x$.  
                    This implies that in permutation $\pi$, when $i$'s turn comes, there exists no feasible maximum weight perfect matching of size $W$ in which $i$ gets an acceptable house and each agent preceding $i$ in permutation $\pi$ gets $t_i$ acceptable houses. Since $i$ can get a more preferred house by misreporting, $i$ gets an acceptable house if he reports $A_i'$. Let such an allocation be $x'$. Note that $x'$ is a feasible maximum weight perfect matching even when $i$ tells the truth and even if each agent preceding $i$ in permutation $\pi$ gets $t_i$ acceptable houses. But this is a contradiction because there does exist a feasible maximum weight perfect matching of size $W$ in which $i$ gets an acceptable house and each agent preceding $i$ in permutation $\pi$ gets $t_i$ acceptable houses.
                                                           \end{proof}

           % Does that imply PO?
           %
           %
         %   \section{Only Endowments}

        %We consider the setting in which 
    
              % \begin{proposition}
              %     Core stability implies SIR.
              %     \end{proposition}
              %
              %
              %     DOES IT???

                 \begin{proposition}
                    For 1-0 utilities, an IR welfare maximizing allocation may not be core stable. 
                 \end{proposition}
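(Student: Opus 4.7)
The plan is to exhibit a counterexample: a specific instance of the model together with an allocation that is individually rational and welfare-maximizing among IR allocations, yet admits a blocking coalition.

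The candidate instance has $N=\{1,2,3,4\}$, $H=\{h_1,h_2,h_3,h_4\}$, and $\w(i)=h_i$ for every $i$. The acceptable sets are chosen so that two agents compete for the same two houses on each side: $A_1=A_3=\{h_2\}$ and $A_2=A_4=\{h_1\}$. I would then exhibit the allocation $x$ defined by $x(1)=h_3$, $x(2)=h_4$, $x(3)=h_2$, $x(4)=h_1$ and check three things in turn.

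First, $x$ is IR: since $\w(i)=h_i\notin A_i$ for every $i$, every endowment has utility $0$, so every allocation is trivially IR. Second, $x$ maximizes welfare among IR allocations: the house $h_1$ is acceptable only to agents $2$ and $4$, and $h_2$ only to agents $1$ and $3$, while $h_3$ and $h_4$ are acceptable to no one; hence in any allocation at most two agents are satisfied, and $x$ reaches this bound by satisfying agents $3$ and $4$. Third, the coalition $S=\{1,2\}$ blocks $x$: in $x$ agent $1$ holds the unacceptable house $h_3$ and agent $2$ holds the unacceptable house $h_4$, whereas the reallocation $y(1)=h_2$, $y(2)=h_1$ uses only $\w(S)=\{h_1,h_2\}$ and gives both agents their unique acceptable house, so $y(i)\succ_i x(i)$ for each $i\in S$. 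Therefore $x$ is an IR welfare-maximizing allocation that is not core stable.

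There is no real obstacle here; the only mild care needed is to make sure that the welfare upper bound of $2$ is tight against \emph{all} IR allocations (not only S-IR ones), which is immediate from the observation that IR is vacuous in this instance because every endowment is unacceptable. If one wanted the example to additionally show non-core stability for an allocation that could plausibly be returned by MIR, one could also note that $x$ arises from the matching that places agents $3$ and $4$ on $h_2$ and $h_1$, which is one of several equally welfare-maximal IR matchings and is not ruled out by any property established earlier for MIR beyond IR and welfare-maximality.
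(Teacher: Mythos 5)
Your proposal is correct and is essentially the paper's own proof: the paper uses the same four-agent instance (up to relabeling of agents $3$ and $4$) where each agent's endowment is unacceptable to himself, two houses are each acceptable to two agents, the welfare-maximizing IR allocation satisfies agents $3$ and $4$, and the coalition $\{1,2\}$ blocks by swapping endowments. Your write-up merely spells out the verification steps (vacuous IR, the welfare bound of $2$, and the blocking swap) that the paper leaves implicit.
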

                    \begin{proof}
                        Consider a four agent setting in which each agent owns a house unacceptable to himself. Agent $2$ owns a house that is acceptable to $4$ and $1$ and agent $1$ owns a house that is acceptable to $2$ and $3$. Consider an allocation in which agent $3$ gets $1$'s house and $4$ gets $2$'s house. Such an allocation is IR and satisfies the maximum number of agents. 
However, it is not core stable because $1$ and $2$ can form a blocking coalition. 
      \end{proof}

        \section{Conclusions}
        
        In this paper, two new mechanisms called MSIR and MIR were introduced. % that have relative merits over previous mechanisms. 
        See Table~\ref{table:summary} for a summary of properties satisfied by different mechanisms for house allocation with existing tenants. 
       
       \begin{table}[h!]
       %    \small
           \centering
               \scalebox{1}{
       \centering
               \begin{tabular}{lcccc}
               \toprule
                       &\textbf{MSIR}&\textbf{MIR}\\
               \midrule
               S-IR&+&-\\
                 IR&+&+\\
                   Core&+&-\\
                       \midrule
               Pareto optimal&-&+\\
               Max welfare subject to SIR&+&-\\
                          Max welfare subject to IR&-&+\\
                Max welfare&-&+\\
               \bottomrule
               \end{tabular}
               }
               \caption{Properties satisfied by mechanisms for house allocation with existing tenants with dichotomous preferences. 
               }
               \label{table:summary}
               \end{table}

        % \begin{table}[h!]
 %        %    \small
 %            \centering
 %                \scalebox{1}{
 %        \centering
 %                \begin{tabular}{lcccc}
 %                \toprule
 %                        &\textbf{MSIR}&\textbf{MIR}&TCR&Max Matching\\
 %                \midrule
 %                S-IR&+&-&-&-\\
 %                  IR&+&+&+&-\\
 %                    Core&+&-&+&-\\
 %                        \midrule
 %                Pareto optimal&-&+&+&+\\
 %                Max welfare subject to SIR&+&-&-&-\\
 %                           Max welfare subject to IR&-&+&-&-\\
 %                 Max welfare&-&+&-&+\\
 %                \bottomrule
 %                \end{tabular}
 %                }
 %                \caption{Properties satisfied by mechanisms for house allocation with existing tenants with dichotomous preferences.
 %                }
 %                \label{table:summary}
 %                \end{table}

      \section*{Acknowledgments}

     The author is funded by the Australian Government through the Department of Communications and the Australian Research Council through the ICT Centre of Excellence Program. The authors thanks P{\'{e}}ter Bir{\'{o} and Greg Plaxton for useful comments and pointers. 
                
        % \bibliography{../../pamas/abb,../../pamas/group,../../pamas/brandt,../../pamas/aziz}

		\end{document}